\newtheorem{statement}{Statement}
\newcommand{\gate}{\textrm{gate}}
\newcommand{\vis}{\textrm{vis}}
\newcommand{\blind}{\textrm{blind}}
\newcommand{\R}{\textrm{R}}
\title{Simultaneous Straight-line Drawing of a Planar Graph and Its Rectangular Dual }
\author{
Tamara  Mchedlidze}
\institute{Faculty of Informatics, Karlsruhe
Institute of Technology (KIT), Germany \texttt{mched@iti.uka.de}}
\begin{document}
\maketitle

\begin{abstract}
A natural way to represent on the plane both a planar graph and its dual is to follow the definition of the dual, thus, to place vertices inside their corresponding primal faces, and to draw the dual edges so that they only cross their corresponding primal edges. The problem of constructing such drawings has a long tradition when the drawings of both primal and dual are required to be straight-line. We consider the same problem for a planar graph and its rectangular dual.  We show that the rectangular dual can be resized to host a planar straight-line drawing of its primal.  
\end{abstract}

\section{Introduction}

A \emph{planar drawing} of a planar graph is its representation  on the plane such that its vertices  are mapped to distinct points and its edges to non-intersecting simple Jordan curves. A drawing is called \emph{straight-line} if each edge is represented by a line segment.  It is well-known that each planar graph admits a planar straight line drawing~\cite{Fary48},  even with a quadratic area~\cite{DFPP90,Schnyd90}.
A planar drawing $\Gamma$ partitions the plane into topologically connected regions called \emph{faces}, the unbounded face is called \emph{external} and the remaining are called \emph{internal} faces. The edges  that bound the external face are also called \emph{external}, the remaining edges are \emph{internal}.  A \emph{planar embedding} of a planar digraph $G$ is an equivalence class of planar drawings that induce the same clockwise cyclic ordering of edges around each vertex and that have the same external face.

An alternative way to represent a planar graph $G$  is to draw the vertices as geometric shapes so that, two shapes touch\footnote{We say that two shapes touch if they have a common interval of a positive length.} if and only if the corresponding vertices of $G$ are adjacent. Such type of representation is called \emph{contact representation}. Different kinds of shapes for contact representations of planar graph have been considered(eg.,~\cite{AlamBFKK12,FraysseixMR94,DuncanGHKK12,GoncalvesLP12}). One of the most  simple  for the visual perception is a contact representation with rectangles. 
A contact representation with rectangles is called  \emph{rectangular subdivision}, if it forms a partition of a rectangle into a set of smaller non-intersecting rectangles such that no four of them meet at the same point. Such contact representation of a planar graph $G$ is known as a \emph{rectangular dual} of $G$, and is denoted by $D$. Figure~\ref{fig:rect_dual} shows a planar graph and its rectangular dual.  Graph $G$  is referred to as \emph{primal} of $D$. Unfortunately, not all planar graphs admit a rectangular dual. In particular, a planar graph $G$ has a rectangular dual $D$ with four rectangles on the boundary if and only if every internal face of $G$ is a triangle, the external face is a quadrangle and there is  no separating triangles in $G$~(see eg.,~\cite[Theorem~2.1]{He99}). The condition that $D$ is bounded by four rectangles can be relaxed~\cite{He93}.

\begin{figure}[htb]
\centering
\includegraphics[scale=0.7]{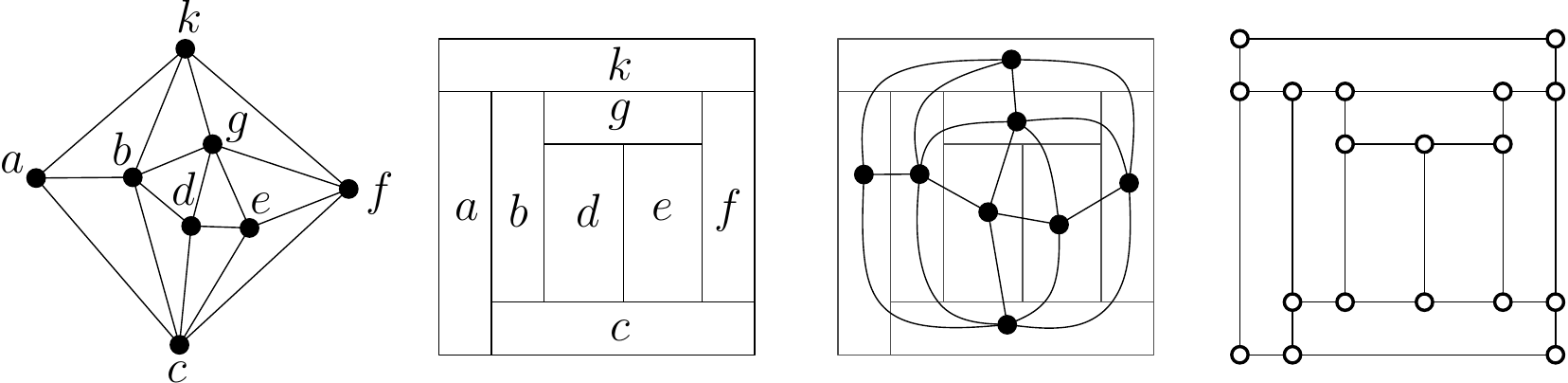}
\caption{From left to right: A graph $G$, its rectangular dual $D$, a simultaneous planar-rectangular dual drawing of $G$ and $D$, graph $G'$ constructed from $D$. }
\label{fig:rect_dual}
\end{figure}

A natural way to simultaneously represent both a planar graph $G$ and its rectangular dual $D$ is to draw a vertex of $G$ inside its corresponding rectangle  and to draw each edge $(u,v)$ as a curve crossing only the common segment of the rectangles representing $u$ and $v$ (see Figure~\ref{fig:rect_dual}, second from the right). To simplify the visual complexity of such a representation one may ask for a straight-line drawing of $G$.  Such a drawing is called \emph{straight-line simultaneous  drawing} of a graph $G$ and  its rectangular dual $D$.    It is not surprising that if the drawing of a rectangular dual $D$ is fixed, then it might not be possible to position the vertices of $G$  in order to obtain a straight-line simultaneous drawing of $G$ and $D$~\cite[Lemma~1]{AlamKKM14}. It is also known that the corresponding  decision problem is $\cal NP$-hard~\cite[Theorem~1]{AlamKKM14}. In this paper, we show that if the rectangular dual $D$ is allowed to be resized, i.e. we are allowed to change the sizes of the rectangles (without further changing the structure) we can achieve a straight-line simultaneous drawing of the primal graph $G$ and of this resized rectangular dual. 

The problem of straight-line simultaneous drawing of a graph and its rectangular dual finds application in visualization of clustered graphs~\cite{AlamKKM14,HuGK10}. A rectangular subdivision can be seen as a simplification of a map. Assume that each region of this map contains some elements related to each other (a cluster, and intra-cluster edges) and to  the elements of adjacent regions (inter-cluster edges). Some possible readability requirements for a visualization of this network together with the map are that the entire network is drown in a planar fashion, the intra-cluster edges must lie completely inside the corresponding region and the inter-cluster edges must cross only the common segment of the regions where their end-points lie.   A simple approach to construct such a visualization is to contract each cluster to a vertex, which results in the graph primal to the given rectangular subdivision. Then, construct a straight-line simultaneous drawing of the primal and the rectangular dual and, finally, uncontract the clusters. This approach is described in detail in~\cite[Theorem~2]{AlamKKM14}.   

Allow us, until the end of this section, to reverse the roles of $G$ and $D$. In particular, consider the graph $G'$, the vertices of which are the corners of the rectangles of $D$ and the edges of which are the parts of the sides of the rectangles connecting the vertices (see Figure~\ref{fig:rect_dual}, right). The edges of this graph are represented either by horizontal or by vertical segments, and each face, including the external, forms a rectangle, i.e. $D$ implies  a so-called \emph{rectangular drawing} of $G'$~\cite{gdhandbook13rect}.  Then, graph $G$ becomes a \emph{week dual}\footnote{The  \emph{dual graph} $G^\star$ of a planar graph $G$ with a fixed planar drawing is formed by placing a vertex inside each face of $G$, and connecting  vertices of $G^\star$ whose corresponding faces in $G$ are adjacent. A \emph{week dual} of the graph $G$ results 
by removing from $G^\star$ the vertex representing the external face of $G$.
Graph $G$ is  called the \emph{primal} of $G^\star$.
} of this new graph $G'$. 
When asking for a straight-line simultaneous drawing of $G$ and $D$ we are actually asking for the  straight-line simultaneous drawing of the primal $G'$ and its dual $G$. This point of view helps us to summarize the related work in the next paragraphs. 

Drawings of the primal and dual graphs so that each vertex of the dual is placed inside the corresponding face of the primal and each dual edge crosses only the corresponding primal edge will be referred to as  \emph{simultaneous planar-dual drawing}. Such a drawing is called \emph{straight-line} if both graphs are drawn straight-line.  It is immediately clear that in case of a non-week dual   a straight-line simultaneous planar-dual drawing does not exist and at least one edge (dual to an external edge)  need to have a bend. To avoid this special case from now on we only consider the week dual graph, without further mentioning  it.

Already back in 1963 Tutte~\cite{Tut63} considered the problem of constructing a straight-line simultaneous  planar-dual drawing and showed that it exists when the primal graph is triconnected. The drawing constructed by Tutte's algorithm may have exponentially large area. Only four decades later,  Erten and Kobourov~\cite{ErtenK05} provided a linear-time algorithm to construct a straight-line simultaneous planar-dual drawing on a grid of size $(2n-2) \times (2n-2)$ for the same family of graphs. Later, Zhang and He~\cite{ZH06} improved this result to a grid of size $(n-1) \times n$.

Observe that talking about straight-line simultaneous planar-dual drawing, we ask for the construction of the drawings of both the primal and the dual. A stricter variation of the straight-line simultaneous  planar-dual drawings was considered by Bern and Gilbert~\cite{BernG92}, where the drawing of the primal graph is fixed and one only need to determine the positions of dual vertices. Notice that here it is also required that each dual edge crosses only the corresponding primal. Bern and Gilbert observed that the problem is easy if all faces of the primal are triangles, thus the dual vertices can be placed at the meeting points of angle bisectors.
They presented a linear time algorithm to construct the drawing of the dual in case where convex quadrilateral faces are also present. They showed that a straight-line drawing of the dual does not always exist if non-convex quadrilaterals are present. Finally, they proved that the decision problem is $\cal NP$-hard for five-sided convex faces. 

Specially convenient for a discretization method~\cite{Baker88}  are straight-line simultaneous planar-dual drawings with the additional requirement that the primal and dual edges cross at right-angle.  Such drawing always exists if each internal face of the primal graph  is a non-obtuse triangle. In particular, the dual graph can be drawn by joining perpendicular bisectors of the edges~\cite{Bern92}. The requirement of right-angle crossing in  straight-line simultaneous planar-dual drawings have also been studied in~\cite{ArgyriouBKS13,Brightwell93,Mohar97}, see~\cite{gdhandbook13} for an overview.

In this paper we show that a given rectangular dual $D$ of a planar graph $G$ can be resized so that $G$ and this resized rectangular dual obtain a simultaneous straight-line drawing. Proof of this statement is the subject of Section~\ref{sec:main}. Before proving this result we introduce the necessary definitions in Section~\ref{sec:definitions} and give preliminary observations in Section~\ref{sec:visibility}.

\section{Definitions and useful facts}
\label{sec:definitions}

\paragraph{\bf $st$-digraph and its parts} Consider now a directed graph $G$, \emph{digraph}, for short. A \emph{source} (resp. a \emph{sink}) of  $G$ is a vertex with only outgoing  (resp. incoming) edges. An \emph{$st$-digraph} is an acyclic digraph with exactly one source $s$ and exactly one sink $t$. A \emph{planar $st$-digraph} is an $st$-digraph that is planar and provided with a planar embedding such that vertices $s$ and $t$ lie on the boundary of the external face(Figure~\ref{fig:st-digraph}, left).  It is common to visualize planar $st$-digraphs in an \emph{upward} fashion, i.e. with edges represented by curves monotonically increasing in upward direction. 

It is not hard to see (refer also to~\cite[Lemma~4.1]{diBattistaThebook}) that a face $f$ of a planar $st$-digraph $G$ is bounded by two directed paths meeting only at the source and at the sink of $f$ (see Figure~\ref{fig:st-digraph}, left).  If we imagine $G$ being embedded upward we can characterize these paths as the \emph{left} and the \emph{right  boundaries} of $f$. Let $e$ be an edge of $G$, the face of $G$ lying to the left (resp. right) of $e$ is called \emph{left} (resp. \emph{right}) face of $e$ (see again Figure~\ref{fig:st-digraph}, left).

\begin{figure}[htb]
\centering
\includegraphics[scale=0.7]{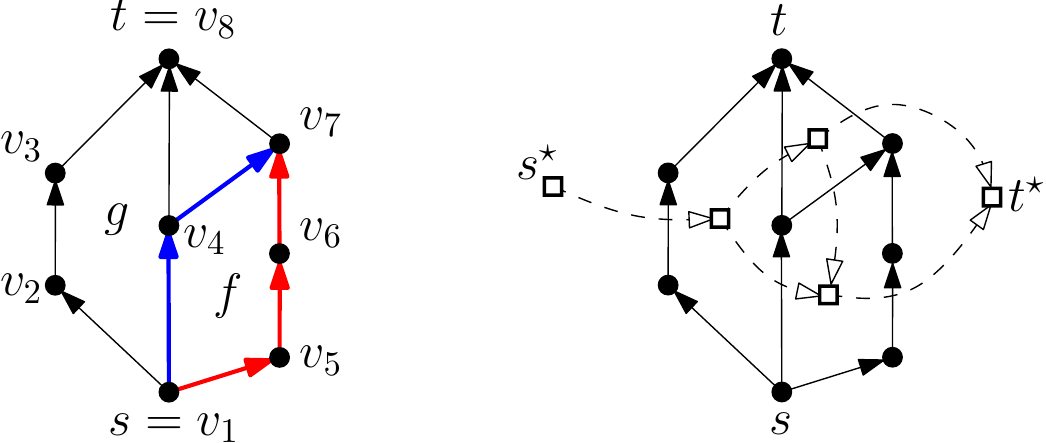}
\caption{Left: A planar $st$-digraph $G$. The blue (resp. red) path comprises the left (resp. right) boundary of face $f$. Face $g$ (resp. $f$) is the left (resp. right) face of the edge $(s,v_4)$. Indicies of the vertices are given accordingly to a topological ordering of the vertices. Right: The dual $G^\star$ of the st-digraph $G$ is depicted by dashed edges, the multiple edges are merged. }
\label{fig:st-digraph}
\end{figure}

\paragraph{\bf Topological ordering and dual digraph} A \emph{topological ordering} of a digraph $G=(V,E)$  is a 1-1 function $\rho:V\rightarrow \{1\dots |V|\}$ such that for every edge $(u, v)$ we have $\rho(u)<\rho(v)$ (Figure~\ref{fig:st-digraph}, left).

We define the \emph{dual} digraph $G^\star$ of a planar $st$-digraph $G$ as follows. The vertex set of $G^\star$ is the set of internal faces of $G$ plus the two vertices, $s^\star$ and $t^\star$, for the external face of $G$, where $s^\star$ is for its left  and $t^\star$ for its right boundary (Figure~\ref{fig:st-digraph}, right). For every edge $e\neq (s,t)$ of $G$, $G^\star$ has an edge $e^\star=(f,g)$ where $f$ and $g$ are the left and the right faces of $e$, respectively.  Digraph $G^\star$ is generally a multigraph, but in this work we merge the multiple edges to one. Digraph $G^\star$ is an $st$-digraph with the source $s^\star$ the sink $t^\star$~\cite{diBattistaThebook}.

\paragraph{\bf Rectangual dual}  Let $G$ be a graph. A \emph{rectangular dual} $D$ of $G$ is a rectangular subdivision  $\cal R$ and a one-to-one correspondence between the vertices of $G$ and the rectangles of $\cal R$ such that two vertices are adjacent in $G$ if and only if their corresponding rectangles share a common boundary.  If two vertices are adjacent we say that the corresponding rectangles are also \emph{adjacent}. For the sake of simplicity we use the same notation for the vertices of $G$ and for the rectangles of $D$.

\paragraph{\bf Simultanous drawing of a planar graph and its rectangular dual}
Let $G$ be a graph admitting a rectangular dual $D$. We say that $G$ and $D$ have a \emph{straight-line simultaneous drawing}, if we can place each vertex of $G$ inside its corresponding rectangle of $D$ such that if the edges of $G$ are drawn straight-line, the resulting drawing of $G$ is planar and each edge $(u,v)$ crosses $D$ only at a single point contained in the common boundary of the rectangles representing $u$ and $v$.

\paragraph{\bf Notation and operations for rectangles and rectangular dual} Let $u$ be a rectangle on the plane with edges parallel to coordinate axes. We denote by $x_1(u),~x_2(u),~y_1(u),~y_2(u)$ the x- and y-coordinates of the corners of $u$, where $x_1(u)<x_2(u)$ and $y_1(u)<y_2(u)$. We denote by $R(u)$ the rightmost segment of $u$. Let $v$ be a different rectangle adjacent to $u$. We
denote by $[u,v]$ the maximal common segment of $u$ and $v$.  If segment $[u,v]$ is vertical, we denote by $y_1[u,v]$,  $y_2[u,v]$ the y-coordinates of its end-points. We say that $u$ and $v$ have \emph{vertical} (resp. \emph{horizontal}) adjacency if the segment $[u,v]$ is vertical (resp. horizontal).

Given a rectangle $u$ on the plane, we define as \emph{stretch} of $u$ as the increase of $x_2(u)$.  Let $D$ be a rectangular dual of a planar graph $G$. We define a \emph{scaling} of $D$ to be a rectangular dual of $G$ that results from resizing of some of the rectangles of $D$. Observe that scaling does not change the type of the adjacency(vertical or horizontal) of two adjacent rectangles of $D$. 

\paragraph{\bf Visibility} Our target is to construct a straight-line simultaneous drawing of a planar graph $G$ and its rectanglular dual $D$.  When we place vertex $u$ inside its corresponding rectangle in $D$ we denote by $x(u)$, $y(u)$ the coordinates of $u$.  Our main requirement is that an edge $(u,v)$ of $G$ crosses the boundaries of the rectangles of $D$ only at a single point and particularly it only crosses the segment $[u,v]$.   To work with this requirement, we need several additional definitions which reflect the notion of visibility of a vertex inside a rectangle to an adjacent rectangle.  Let $u$ and $v$ be two adjacent vertices of $G$, and assume that the position of vertex $u$ in its corresponding rectangle of $D$ is fixed. The \emph{visibility region} of $u$ inside $v$, denoted by $\vis(u,v)$, is the region delimited by the boundary of the rectangle $v$ and the lines through $u$ and the two end-points of the segment $[u,v]$, see for example Figure~\ref{fig:hor_adj} and Figure~\ref{fig:diverg_adj}.  If $u$ and $v$ have horizontal adjacency such that $u$ is above $v$ and $x_1(u)<x_1(v)$, $x_2(u)=x_2(v)$ then $\vis(u,v)$ contains the topmost segment of $R(v)$ (Figure~\ref{fig:hor_adj}). In case $u$ and $v$ have vertical adjacency we distinguish two types of visibility region as follows. We say that the visibility region $\vis(u,v)$ is \emph{diverging} and that $u$ is a \emph{diverging neighbor} of $v$  if the two lines through $u$ delimiting $\vis(u,v)$ have slopes of a different sing. Figure~\ref{fig:diverg_adj}, depicts several cases of diverging visibility regions. Figure~\ref{fig:nondiverg_adj}, top left, depicts non-diverging visibility regions.   

Assume again that $u$ and  $v$ have a vertical adjacency and $u$ is to the left of $v$. In case $u$ is a non-diverging neighbor of $v$ we have that either $(1)$ $y_1(v)<y_1(u)<y_2(v)<y(u)<y_2(u)$ or $(2)$ $y_1(u)<y(u)<y_1(v)<y_2(u)<y_2(v)$. See Figure~\ref{fig:nondiverg_adj}, top left for the illustration of the case $(1)$. Consider the set  $R(v)\setminus \vis(u,v)$, it is  non-empty and generally contains two segments; one containing the topmost point of $R(v)$ and one containing the botommost point of $R(v)$. We denote by $\blind(u,v)$  the segment of $R(v)\setminus \vis(u,v)$ which contains the topmost (resp. bottommost) point of $R(v)$ for the case $(1)$ (resp. $(2)$) (Figure~\ref{fig:nondiverg_adj}, top left). 

During the construction of the straight-line simultaneous drawing of $G$ and its rectangular dual $D$ we mostly place vertices close to the right boundary of the rectangles. To formalize this we use the following notation. Let $u$ be a rectangle, we denote by $\gate(u)$  a proper sub-interval of $R(u)$ not containing both end-points of $R(u)$.

\paragraph{\bf Regular edge labeling}
Let $G$ be a planar embedded planar graph with no separating triangle, with exactly four vertices on the external face and each internal face being a triangle. It is known that such a graph has a rectangular dual  $D$(see e.g.~\cite{He93}). Two adjacent rectangles of $D$ have either vertical or horizontal adjacency. This fact is mirrored by  so-called \emph{regular edge labeling} (REL, for short)~\cite{He93}, defined for graph $G$, which is also known as \emph{transversal structure}~\cite{Fusy09}. It is formally defined as follows. A REL of $G$ is a partition and orientation of its interior edges resulting in two disjoint sets of arcs $E^R$ and $E^B$, so that:
\begin{itemize}
\item For each internal vertex $u$ the edges incident two $u$ appear in the counterclockwise order as follows: edges in $E^R$ outgoing from $u$, edges in $E^B$ incomming to $u$, edges in $E^R$ incomming to $u$, and edges in $E^B$ outgoing from $u$; moreover, none of these four sets of edges is empty;
\item Four outer vertices of $G$ are named $v_N$, $v_S$, $v_W$, and $v_E$. Moreover, the internal edges incident to $v_S$ (resp. $v_N$) are all in $E^R$ and are outgoing from $v_S$ (resp. incomming to $v_N$). Also, the internal edges incident to $v_W$ (resp. $v_E$) are all in $E^B$ and are outgoing from $v_W$ (resp. incomming to  $v_E$).
\end{itemize}
It is known that every planar graph without separating triangles, with exactly four vertices on the outer face and each internal face being a triangle has a REL~\cite[Theorem~2.2]{He93}. Such a REL is used as a tool for constructing a rectangular dual of $G$. Let $G^R$ (resp. $G^B$) be the directed subgraph of $G$ induced by the edges in $E^R$ (resp. $E^B$) and the four exterior edges directed such that $v_S$ (resp. $v_W$) is a source of $G^R$ (reps. $G^B$) and $v_N$ (resp. $v_E$) is  a sink of $G^R$ (reps. $G^B$).  We will heavily rely on the fact that $G^R$ is a planar $st$-digraph with source $v_S$  and sink $v_N$~\cite[Lemma 2.3]{He93}. We use \emph{red} and \emph{blue} colors to distinguish edges in $E^R$ and $E^B$.

Observe that given a rectangular dual  $D$ of $G$, one can construct a REL by: $(a)$  placing internal edges of $G$ depicting horizontal adjacency to $E^R$  and orienting them from bottom to top, and $(b)$ placing internal edges depicting vertical adjacency to $E^B$ and orienting them from left to right. This REL is said to be \emph{defined by} the rectangular dual $D$. The reverse is also true, given a REL of $G$ one can construct a rectangular dual $D$ such that the blue edges specify vertical and the red edges horizontal adjacency~\cite[Theorem~4.3]{He93}. This $D$ is said to be \emph{consistent} with the given REL.

\section{On visibility between two adjacent rectangles}
\label{sec:visibility}

The following statement  is illustrated in Figure~\ref{fig:hor_adj}.
\begin{statement}
\label{stat:horizontal}
Let $u$ and $v$ be two horizontally adjacent rectangles, such that $u$ is above $v$,  $x_1(u)<x_1(v)$ and $x_2(u)=x_2(v)$.  There exists $X \geq x_2(v)$ such that, $\forall x\geq X$,  if we set $x_2(v) =x_2(u)= x$  then $\gate(v) \subset \vis(u,v)$. 
\end{statement} 
\begin{proof}
The statement follows from the facts that $(1)$ the region  $\vis(u,v)$ contains the upper part of $\R(v)$ and  $(2)$ the lower half-line delimiting the region $\vis(u,v)$ has a negative slope.
\end{proof}

\begin{figure}[htb]
\centering
\includegraphics[scale=0.7]{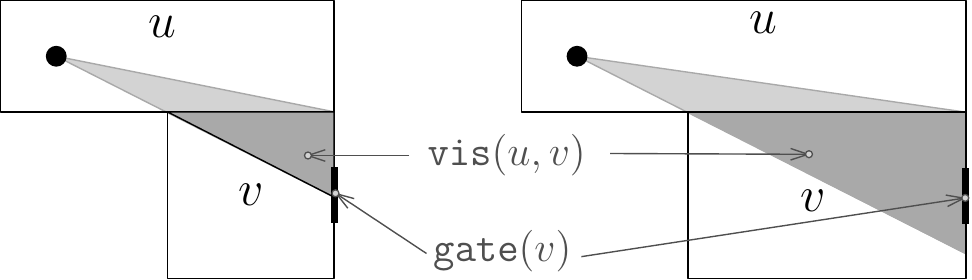}
\caption{Two horisontally adjacent rectangles such that $x_1(u)<x_1(v)$ and $x_2(u)=x_2(v)$, the case considered in Statement~\ref{stat:horizontal}. The fact $\gate(v) \subset \vis(u,v)$ does not hold before, but holds after the stretch of $u$ and $v$.  }
\label{fig:hor_adj}
\end{figure}

\begin{statement}
\label{stat:diverging}
Let $u$ and $v$ be two vertically adjacent rectangles, such that $u$ is to the left of $v$ and $u$ is a diverging neighbor of $v$.  There exists $X \geq x_2(v)$ such that, $\forall x\geq X$,  if we set $x_2(v) = x$  then the $\gate(v) \subset \vis(u,v)$. 
\end{statement}
\begin{proof}
The four cases determined by possible relations among the coordinates $y_1(u)$,  $y_2(u)$,  $y_1(v)$ and  $y_2(v)$ are shown on Figure~\ref{fig:diverg_adj}.
Since the region $\vis(u,v)$ is diverging, i.e. the half-lines through $u$ delimiting $\vis(u,v)$ have positive and negative slope, there exists $X\geq x_2(v)$, such that, for any $x\geq X$, if we set $x_2(v)=x$, then $\R(v) \subset \vis(u,v)$ and therefore $\gate(v) \subset \vis(u,v)$. 
\end{proof}

\begin{figure}[htb]
\centering
\includegraphics[scale=0.7]{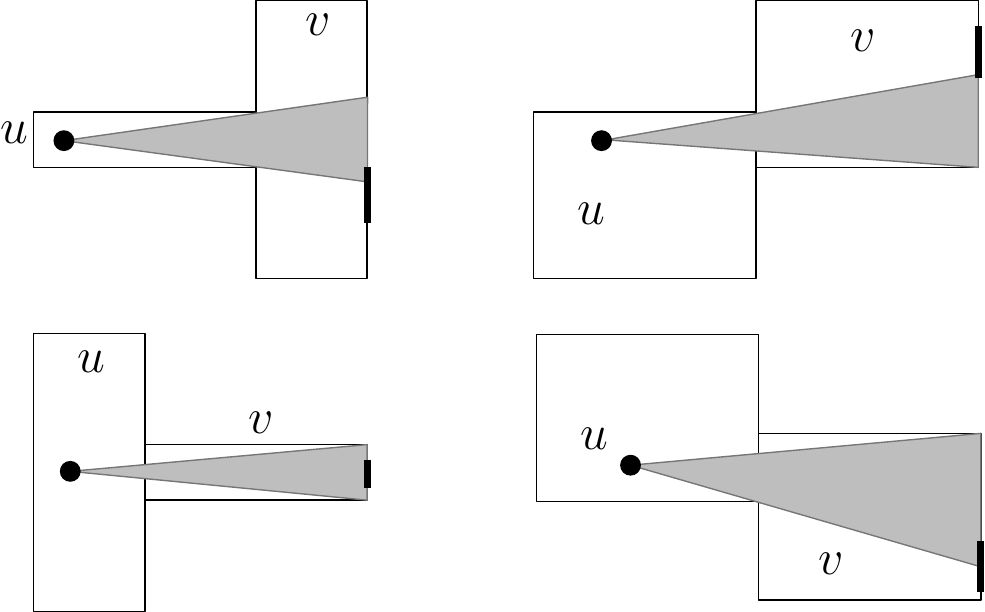}
\caption{The illustration of Statement~\ref{stat:diverging}. Four cases of vertical adjacency where $u$ is a diverging neighbor of $v$.}
\label{fig:diverg_adj}
\end{figure}

\begin{statement}
\label{stat:non-diverging-above}
Let $u$ and $v$ be two vertically adjacent rectangles, such that $u$ is to the left of $v$ and $y(u) > y_2[u,v]$ ($u$ is a non-diverging neighbor of $v$). If $\gate(v) \cap \blind(u,v) = \emptyset$ then  there exists $X \geq x_2(v)$ such that, $\forall x\geq X$,  if we set $x_2(v) = x$  then $\gate(v) \cap \blind(u,v) \neq \emptyset$.  Otherwise, there exists $y<y(u)$ such that if we set $y_2(v)=y$ then $\gate(v) \subset \vis(u,v)$.
\end{statement}
\begin{proof}
Assume first that $\gate(v) \cap \blind(u,v) = \emptyset$ (see Figure~\ref{fig:nondiverg_adj}, top left) for the illustration.
Since both lines delimiting $\vis(u,v)$ have negative slope, as $x_2(v)$ grows $gate(v)$ ``slides'' over $\vis(u,v)$ 
from bottom to top. Thus, there exists  $X \geq x_2(v)$ such that, $\forall x\geq X$,  if we set $x_2(v) = x$ then $\vis(u,v) \cap \blind(u,v) \neq \emptyset$ (Figure~\ref{fig:nondiverg_adj}, top right). 

Assume now that  $\gate(v) \cap \blind(u,v) \neq \emptyset$ (Figure~\ref{fig:nondiverg_adj}, bottom left). Observe that as $y_2(v)$ increases and remains less than $y(u)$, the slope of the topmost half-line delimiting $\vis(u,v)$ increases and remains negative. Thus, if   $y_2(v)=y(u)$, then the mentioned half-line has slope zero and $\blind(u,v)=\emptyset$. So, when $y_2(v)$ tends to $y(u)$, $\blind(u,v)$ tends to $\emptyset$. Recall that $\blind(u,v)$ is the topmost segment of $\R(v)$, and that $\gate(v)$ does not contain the topmost point of $R(v)$. Hence, there exists $y_2(v)<y<y(u)$ when $\blind(u,v)$ is small enough and does not intersect with $\gate(v)$ (Figure~\ref{fig:nondiverg_adj}, bottom right).  
\end{proof}

\begin{figure}[htb]
\centering
\includegraphics[scale=0.7]{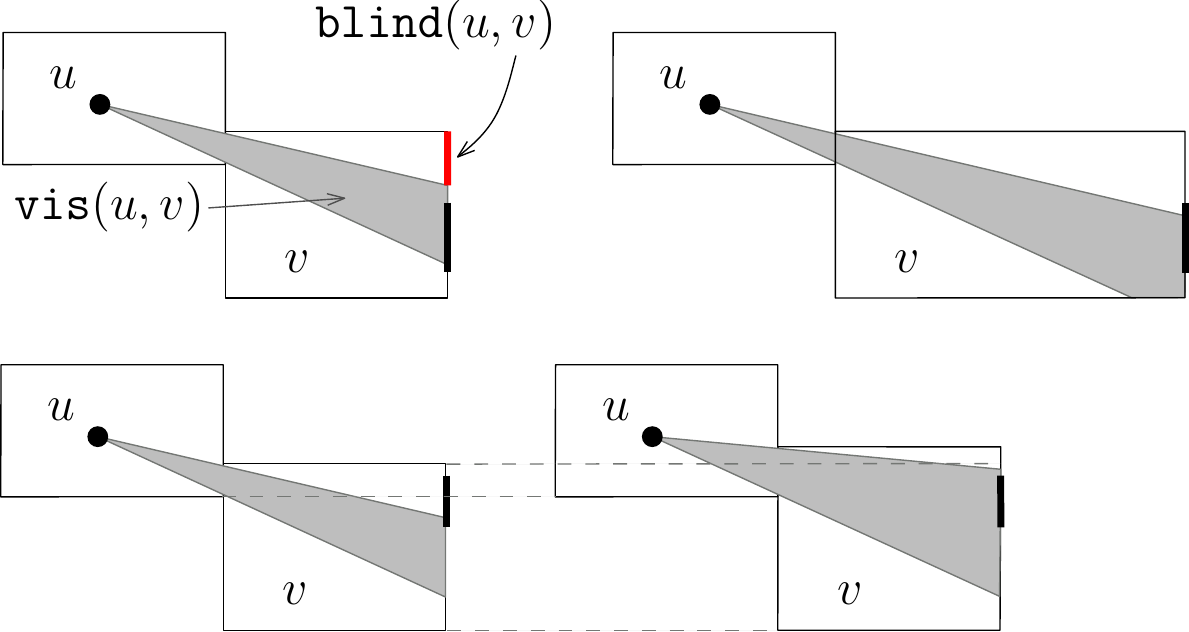}

\caption{The top left figure illustrates the case when $\gate(v) \cap \blind(u,v) = \emptyset$.  The top right figure shows that after stretching $v$ we get that $\gate(v) \cap \blind(u,v) \neq \emptyset$. The second line illustrates what happens when $y_2(v)$ is increased. In the left fugure we have that  $\gate(v) \cap \blind(u,v) \neq \emptyset$. The right figures shows the situation after the increase of $y_2(v)$, we get that $\gate(v) \subset \vis(u,v)$.}
\label{fig:nondiverg_adj}
\end{figure}

The  following statement is symmetric to Statement~\ref{stat:non-diverging-above} and can be proven identically.
\begin{statement}
\label{stat:non-diverging-below}
Let $u$ and $v$ be two vertically adjacent rectangles, $u$ is to the left of $v$ and $y(u) < y_1[u,v]$  ($u$ is a non-diverging neighbor of $v$). If $\gate(v) \cap \blind(u,v) = \emptyset$ then  there exists $X \geq x_2(v)$ such that, $\forall x\geq X$,  if we set $x_2(v) = x$  then the $\gate(v) \cap \blind(u,v) \neq \emptyset$.  Otherwise, there exists $y>y(u)$ such that if we set $y_2(v)=y$ then $\gate(v) \subset \vis(u,v)$.
\end{statement}


\section{Main result}
\label{sec:main}

\begin{theorem}
Let $G$ be a planar graph admitting a rectangular dual $D$. There exists a scaling $D'$ of $D$ such that $G$ and $D'$ admit a  straight-line simultaneous drawing.
\end{theorem}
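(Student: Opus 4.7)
The plan is to build the drawing vertex by vertex. I fix a regular edge labeling of $G$; the induced orientation (red edges south-to-north, blue edges west-to-east) is acyclic, so it admits a topological order $v_1,\ldots,v_n$. I process the vertices in this order. When I reach $v_i$, its red predecessors (rectangles below $v_i$ sharing a horizontal boundary) and blue predecessors (rectangles to the left of $v_i$ sharing a vertical boundary) have already been placed inside their rectangles. I then scale the rectangular dual so that $v_i$ can be placed in $\gate(v_i)$ with all incident edges to its predecessors correctly drawn, and finally place $v_i$.

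For each predecessor $u$ of $v_i$ I invoke the appropriate statement from Section~\ref{sec:visibility}, or its obvious symmetric version, to scale the rectangle $v_i$ so that $\gate(v_i) \subset \vis(u,v_i)$. A red predecessor $u$ below $v_i$ is handled by the symmetric counterpart of Statement~\ref{stat:horizontal}, which horizontally stretches $u$ and $v_i$ together. A diverging blue predecessor is handled by Statement~\ref{stat:diverging}, which stretches $v_i$ to the right. A non-diverging blue predecessor is handled by Statement~\ref{stat:non-diverging-above} or \ref{stat:non-diverging-below}: $v_i$ is stretched rightward and, if necessary, its top (or bottom) edge is shifted toward $y(u)$. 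After applying these scalings one predecessor at a time, the resulting $\gate(v_i)$ is contained in $\bigcap_u \vis(u,v_i)$. Placing $v_i$ anywhere in this region gives edges $(u,v_i)$ that cross the (scaled) dual only at a single point inside $[u,v_i]$.

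The correctness of this sequential application rests on a monotonicity property: once the containment $\gate(\cdot) \subset \vis(\cdot,\cdot)$ is established by one of the statements, stretching further rightward, or shifting the top (or bottom) edge further in the prescribed direction, preserves the containment. This ensures that the scalings performed for different predecessors of $v_i$ do not undo each other, and that the scalings performed at step $i$ do not break the conditions already satisfied for the vertices placed in earlier steps.

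The main obstacle is managing the global consistency of the cascading scalings. Stretching $v_i$ rightward forces its right-neighbors to be stretched as well, and raising its top edge forces its top-neighbors to shift upward; these neighbors correspond to successors of $v_i$ in the topological order and have not yet been placed, but one still has to check that the resulting family of rectangles remains a valid rectangular subdivision and that no adjacency type changes (both guaranteed by the definition of scaling). Furthermore, one must verify that any already-placed vertex $u'$ whose rectangle participates in a later cascade continues to satisfy $\gate(u') \subset \vis(w,u')$ for each predecessor $w$ of $u'$; this is ensured by the same monotonicity argument, tracked carefully across all four statement types.
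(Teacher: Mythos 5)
Your incremental strategy is close in spirit to the paper's, but the order in which you add vertices creates a configuration that the four visibility statements cannot repair, and avoiding that configuration is exactly what the paper's face-by-face schedule is for. Concretely, let $v_i$ have two bottom-neighbours $w_1$ (left) and $w_2$ (right). Both are red predecessors, so in your topological order both are already placed, each near the right side of its own rectangle, and $x_2(w_1)<x_2(v_i)$. The region $\vis(w_1,v_i)$ is the cone from the placed point of $w_1$ through the slit $[w_1,v_i]$, whose right boundary is the ray through $(x_2(w_1),y_1(v_i))$; this ray does not move when you stretch $v_i$, because $w_1$ is not right-aligned with $v_i$ and is not stretched with it. Hence stretching $v_i$ pushes $R(v_i)$, and with it $\gate(v_i)$, \emph{out of} $\vis(w_1,v_i)$ rather than into it; no amount of stretching makes the straight edge $(w_1,v_i)$ cross only $[w_1,v_i]$ --- it will cross $[w_2,v_i]$ instead. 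Statement~\ref{stat:horizontal} does not help: it assumes $x_2(u)=x_2(v)$ and stretches the two rectangles \emph{together}, so that the right endpoint of the slit travels along with $R(v)$. The paper secures this alignment by processing the faces of $(G^R)^\star$ in topological order: each red edge is realized precisely at the step in which it lies on the right boundary of the newly added face, at which moment both of its endpoints sit on the right boundary of the current dual and are stretched jointly. A topological order of the vertices of $G^R\cup G^B$ gives no such guarantee.

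A second, independent problem: for a non-diverging blue predecessor $u$ of $v_i$ with $y(u)<y_1[u,v_i]$, the remedy of Statement~\ref{stat:non-diverging-below} is to move the bottom boundary of $v_i$ down towards $y(u)$. That boundary is shared with a red predecessor of $v_i$, which in your order already carries a placed vertex; shrinking that rectangle from above can expel its vertex and alters the visibility regions of its own neighbours, and your monotonicity claim (which covers only rightward stretching) says nothing about this operation. In the paper the vertical boundary moves are confined to boundaries between two \emph{new} rectangles of the same step (the extreme new rectangles $v_1$ and $v_b$ never require such a move), and the gates of the new critical vertices are pre-positioned strictly between $y(u_p)$ and $y(u_\ell)$ precisely so that these moves cannot clip them. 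These two devices --- the face-by-face schedule and the pre-placement of the gates --- are the real content of the proof, and your proposal is missing both.
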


\begin{proof}

We assume that $D$ is bounded by four rectangles (see Figure~\ref{fig:example_init}) that have horizontal adjacency between each other, if not so we add them to $D$, as well as the corresponding vertices to $G$.   For the simplicity of notation we denote the new graph by $G$ and its rectangular dual by $D$. After the scaling $D'$ of $D$ is created and the straight-line simultaneous drawing of $G$ and  $D'$ is constructed we simply remove the added vertices and rectangles. We denote the bottomost rectangle of $D$ by $v_S$, the topmost by $v_N$, the leftmost by $v_W$ and the rightmost by $v_E$.  
 
\begin{figure}[htb]
\centering
\includegraphics[scale=0.7]{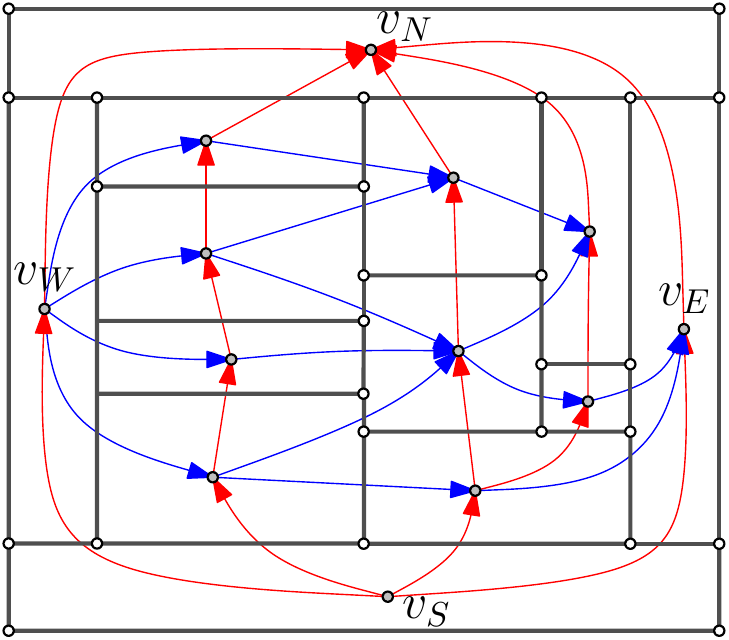}
\caption{Rectangular drawing $\Gamma$ of the primal graph $H$, which we treat as a contact of rectangles. The dual  $G$ of $H$ is colored according to its REL. This graph is used as an example throughout the paper.}
\label{fig:example_init}
\end{figure} 
 
Let  $E$ be the edge set of $G$ and let $E^R, E^B$ be the REL of $G$ that is defined by rectangular dual $D$ (refer to Figure~\ref{fig:example_init}).  Recall that $G^R$ is the subgraph of $G$ containing only the edges of $E^R$ plus the four external edges that are oriented so that $v_S$ is a source and $v_N$ is a sink. Recall also that   $G^R$ is an $st$-digraph.

\begin{figure}[htb]
\centering
\includegraphics[scale=0.5]{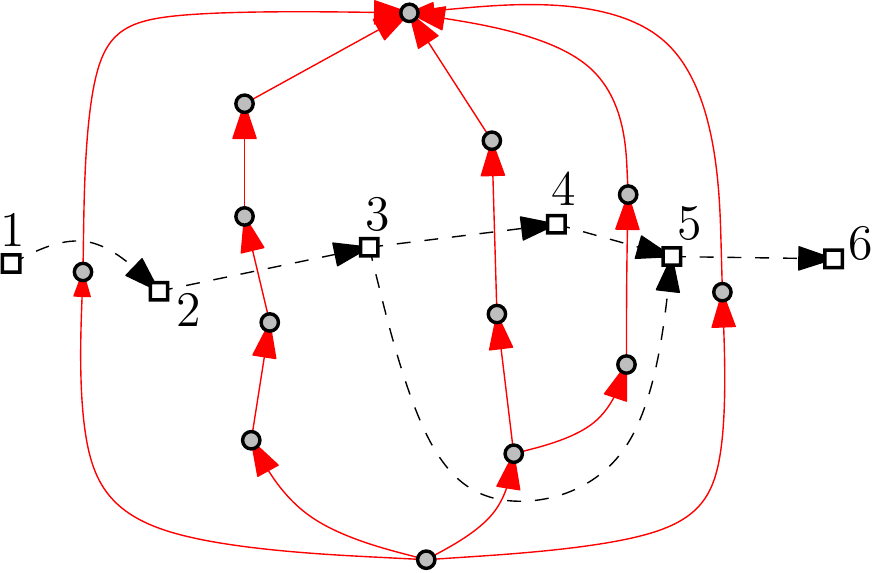}
\caption{ The graph $G^R$, its dual $(G^R)^\star$ and a topological ordering of $(G^R)^\star$, which in this case is unique. }
\label{fig:example_red}
\end{figure}

Let $(G^R)^\star$ be the  dual of $G^R$, it is an $st$-digraph by itself. Let $f_1,\dots,f_k$ be a topological ordering of the vertices of  $(G^R)^\star$ (see Figure~\ref{fig:example_red}). We denote by $G^R_i$, $1\leq i \leq k$ the subgraph of $G^R$ constituted by the vertices and edges of the faces $f_1,\dots,f_i$ (Figure~\ref{fig:example_subgraph}, left). As a special case, graph $G_1^R$ is the subgraph containing only vertices $v_S,~v_W, v_N$ and the edges $(v_S,v_W),~(v_W,v_N)$. While $G_k^R=G^R$.   It is not difficult to see that $G^R_i$ is an $st$-digraph and that $G^R_{i+1}$  can be constructed from $G^R_i$ by adding the right boundary of $f_{i+1}$ to the external face of $G^R_i$ (see Figure~\ref{fig:example_subgraph}). This fact is proven formally in~\cite[Lemma 4]{GiordanoLMSW15} for maximal planar $st$-digraphs, the proof for non-maximal planar $st$-digraph is along the same lines.   
Let $G_i$ be the subgraph of $G$ that is induced by the vertices of  
  $G^R_i$ (see Figure~\ref{fig:example_step}). Observe that the edges of $G_i$ that do not belong to $G_i^R$ are blue and lie in the internal faces of $G_i^R$.

\begin{figure}[htb]
\centering
\includegraphics[scale=0.6]{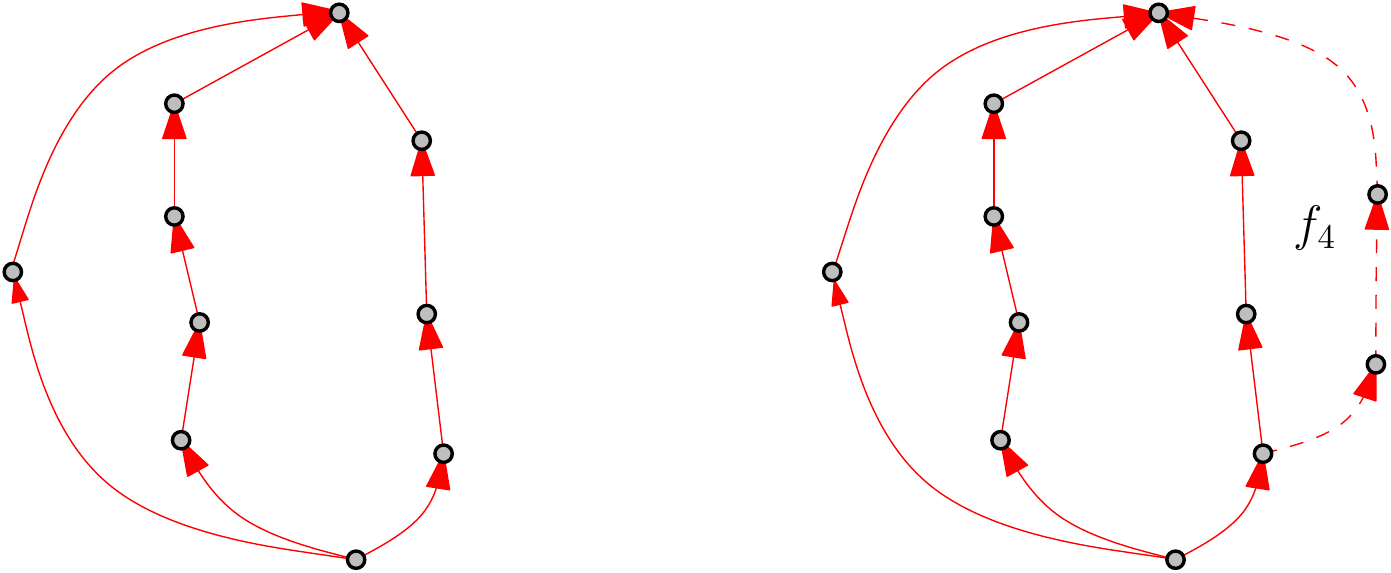}
\caption{Left: The graph $G^R_3$. Right: The graph $G^R_4$, which is produced from $G^R_3$ by adding the vertices and the edges of the right boundary of $f_4$ to the external face of $G^R_3$.}
\label{fig:example_subgraph}
\end{figure}

\begin{figure}[htb]
\centering
\includegraphics[scale=0.6]{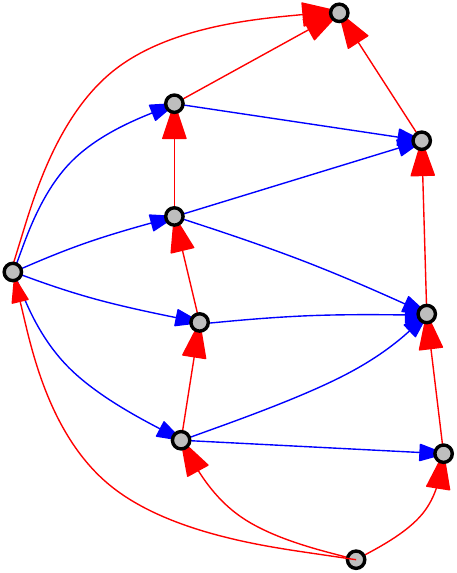}
\caption{The graph $G_3$.}
\label{fig:example_step}
\end{figure}

In the following, by induction on $i$, $1\leq i \leq k$, we construct a rectangular dual $D_i$ of $G_i$ 
consistent with the REL $E^R,~E^B$ (restricted to $G_i$), and show how to construct a straight-line simultaneous drawing of $G_i$ and $D_i$. For $i=k$ we will obtain a rectangular dual $D'$ of $G_k=G$ and the straight-line simultaneous drawing of $G$ and $D'$. Since $D'$ is consistent with the REL $E^R,~E^B$ it represents a scaling of $D$. So the theorem will follow.

 \begin{description}
\item[Base case: $i=1$.]   $G_1$  consists of the vertices $v_W$, $v_S$, $v_N$.  The graph $G_1$ contains only the red edges $(v_S,v_W)$  and $(v_W,v_N)$.  We represent the three vertices $v_S$, $v_W$, $v_N$ of $R_1$ as three rectangles such that $x_1(v_S)=x_1(v_W)=x_1(v_N)$, $x_2(v_S)=x_2(v_W)=x_2(v_N)$  and $y_2(v_S)=y_1(v_W)<y_2(v_W)=y_1(v_N)$ (see Figure~\ref{fig:example_base}). Graph $G_1$ contains exactly the same edges as $G_1^R$ and a straight-line simultaneous drawing of $G_1$ and $D_1$ can be constructed trivially.

\begin{figure}[htb]
\centering
\includegraphics[scale=0.9]{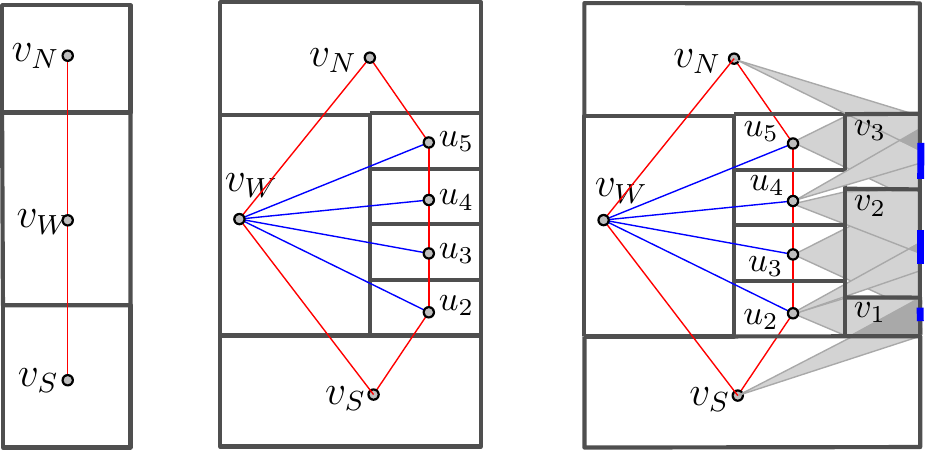}
\caption{Left: the base case. Middle and right: induction step illustrated on the graph of Figure~\ref{fig:example_init}. Before induction step we have  simultaneous drawing of $G_2$ and $D_2$ (middle). After induction step we have simultaneous drawing of $G_3$ and  $D_3$ (Figure~\ref{fig:example_indstep2}). Intermediate steps are also shown in Figure~\ref{fig:example_indstep}. Middle: rectangles $u_1,\dots,u_a$ appear on the right boundary of $D_i$. Right: placement the rectangles  $v_1,\dots,v_b$. The light grey regions show the  visibility of $u_1,\dots,u_a$ inside $v_1,\dots,v_b$. The dark grey regions show the intersection of visibility regions of all neighbors. }
\label{fig:example_base} 
\end{figure}

\item[Induction hypothesis.] For each $j\leq i<k$, there exists a rectangular dual $D_j$ of $G_j$ consistent with $(G^R, G^B)$ such that $G_j$ and $D_j$ have a straight-line simultaneous drawing.

\item[Induction step.] As mentioned above, $G^R_{i+1}$  can be constructed from $G^R_i$ by adding the right boundary of $f_{i+1}$ to the external face of $G^R_i$. Let directed path $u_1,\dots,u_a$ be the left boundary of $f_{i+1}$. Since vertices $u_1,\dots,u_a$ represent a directed sub-path of the right boundary of the external face of $G_i^R$, the rectangles $u_1,\dots,u_a$ appear on the right boundary of $D_i$ and lie on top of each other. See Figure~\ref{fig:example_base} for a specific example.  Let $u_1,v_1,\dots,v_b,u_a$ be the right boundary of $f_{i+1}$. We stretch all the rectangles of the right boundary of $D_i$ except for $u_2,\dots,u_{a-1}$ by the same value (Figure~\ref{fig:example_base}, right) and place the new rectangles  $v_1,\dots,v_b$ vertically between $u_1$ and $u_a$, according to the adjacency between the vertices $u_2,\dots,u_{a-1}$ and $v_1,\dots,v_b$. We set $x_2(v_1)=\dots=x_2(v_b) = x_2(u_a)$, thus the current $D_{i+1}$ is bounded by a rectangle and obviously comprises a rectangular dual of $G_i$, consistent with REL $E^R,~E^B$. 

Consider those of rectangles (resp. vertices) $v_1\dots,v_b$ that are adjacent to at least two rectangles (resp. vertices) among $u_1,\dots, u_a$, we call them \emph{critical}.   We next show that the rectangles of the right boundary of $D_{i+1}$ can be resized so that the gate of each critical rectangle is in the visibility region of each of its neighbor.  The resizing consists of two modifications; we first stretch the right boundary of $D_{i+1}$, which ensures visibility to the gates of some of the critical vertices, and fulfillment of a special condition for the gates of the remaining critical vertices. We then obtain visibility to the gates of these remaining critical vertices by moving vertically some common boundaries of $v_1,\dots,v_b$, such that the existing visibilities are preserved.  We do not perform any operation for non-critical vertices. Their placement is simple and will be explained at the end of the construction.

\begin{figure}[htb]
\centering
\includegraphics[scale=0.9]{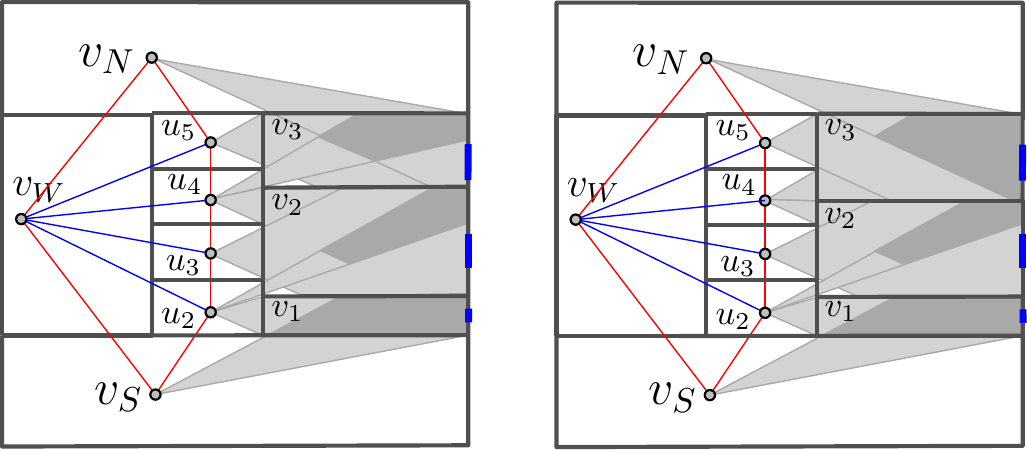}
\caption{\small Left:  After the horizontal stretch of the right boundary of $D_{3}$, where the initial $D_{3}$ is shown in Figure~\ref{fig:example_base}, right. The value of the stretch is determined by the pairs $(v_N,v_3),~(u_4,v_2)$ and $(u_2,v_2)$. For the pair $(v_N,v_3)$ (resp. $(u_4,v_2)$)  Statement~\ref{stat:horizontal} (resp. Statement~\ref{stat:diverging}) is applied, to achieve that $\gate(v_3) \subset \vis(v_N,v_3)$ (resp. $\gate(v_2) \subset \vis(u_4,v_2)$).  Finally, Statement~\ref{stat:non-diverging-below} applies for the pair $(u_2,v_2)$, to archive that $\gate(v_2) \cap \blind(u_2,v_2) \neq \emptyset$.  The remaining adjacencies does not increase the value of the stretch. Right: Statement~\ref{stat:non-diverging-below} is applied to the pair $(u_4,v_3)$ to archive that $\gate(v_3) \subset \vis(u_4,v_3)$  and as a result the common boundary $[v_2,v_3]$ is moved down.}
\label{fig:example_indstep} 
\end{figure}

We first further specify the positions of the gates of those of $v_1,\dots,v_b$ which are critical. See Figure~\ref{fig:gate_placement}, left for the illustration.  Consider a critical vertex $v_q$, $1\leq q \leq b$, and let  $p$ be the minimum index and $\ell$ be the maximum index such that $1\leq p \leq \ell \leq a$ and $u_p$ and $u_\ell$  are adjacent to $v_q$. Positions of the vertices $u_p$ and $u_\ell$ are known by induction hypothesis. We place $\gate(v_q)$ so that $y(u_p)< y_1(\gate(v_q))<y_2(\gate(v_q))<y(u_\ell)$. The reason for this positioning of the gates for the critical vertices will become clear later in the proof.  The gates of the non-critical vertices are not of any interest to us, since they will not be used.

\begin{figure}[htb]
\centering
\includegraphics[scale=0.9]{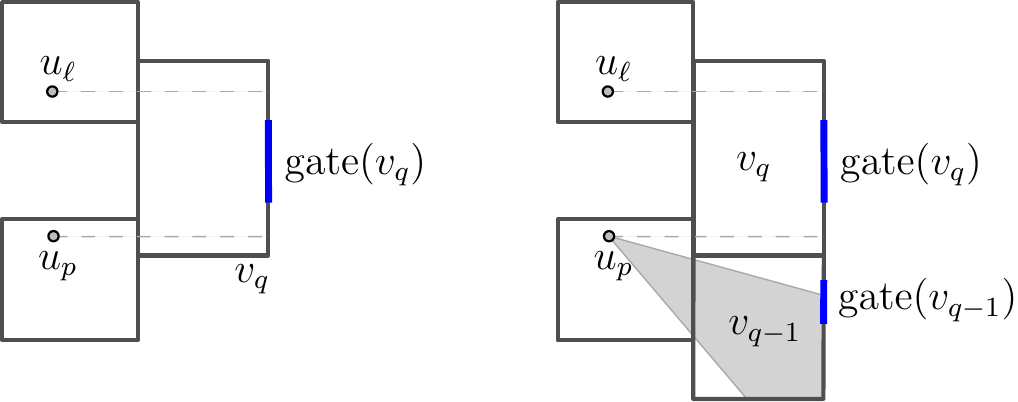}
\caption{\small Left: Illustration for the placement of gates of the critical vertices. Right: After second modification it holds that $\gate(v_q) \subset R(v_q)$. }
\label{fig:gate_placement} 
\end{figure}

Consider a $v_q$, $1 \leq q \leq b$, and its neighbor, say $u_p$, $1\leq p \leq a$. If $u_p$ is a diverging neighbor of $v_q$ then  Statement~\ref{stat:diverging} applies and determines the value $X(u_p,v_q)\geq x_2(v_q)$,  such that  $\forall x\geq X(u_p,v_q)$,  if we set $x_2(v) = x$  then $\gate(v_q) \subset \vis(u_p,v_q)$.  If $u_p$ is a non-diverging neighbor of $v_q$, but $\gate(v_q) \cap \blind(u_p,v_q) = \emptyset$ then Statement~\ref{stat:non-diverging-above} or Statement~\ref{stat:non-diverging-below} determine the value $X(u_p,v_q)\geq x_2(v_q)$, such that $\forall x\geq X(u_p,v_q)$,  if we set $x_2(v) = x$  then $\gate(v_q) \cap \blind(u_p,v_q) \neq \emptyset$.   Finally, the values $X(u_1,v_1)$ and $X(u_a,v_b)$ are determined by Statement~\ref{stat:horizontal}.
 Let $X = \max\{X(u_p,v_q) | 1\leq p \leq a, 1\leq q \leq b\}$.

\begin{figure}[htb]
\centering
\includegraphics[scale=0.9]{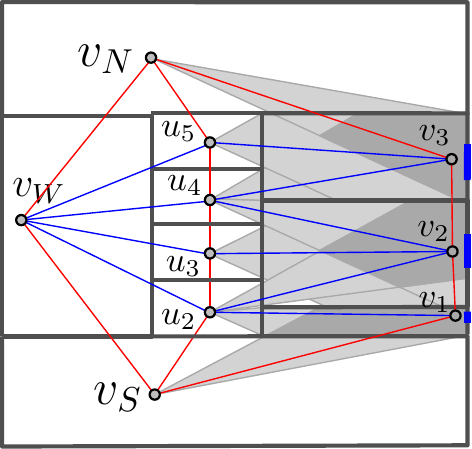}
\caption{\small  Statement~\ref{stat:non-diverging-below} is applied to the pair $(u_2,v_2)$ to archive that $\gate(v_2) \subset \vis(u_2,v_2)$ and as a result the common boundary $[v_1,v_2]$ is moved down.}
\label{fig:example_indstep2} 
\end{figure}

We now stretch the right boundary of $D_{i+1}$ to have coordinate $X$ (refer to Figure~\ref{fig:example_indstep}, left). By Statement~\ref{stat:diverging}, if $u_p$, $1 \leq p \leq a$ , is a diverging neighbor of $v_q$, $1 \leq q \leq b$ ,then  $\gate(v_q) \subset \vis(u_p,v_q)$.

\begin{figure}[htb]
\centering
\includegraphics[scale=0.9]{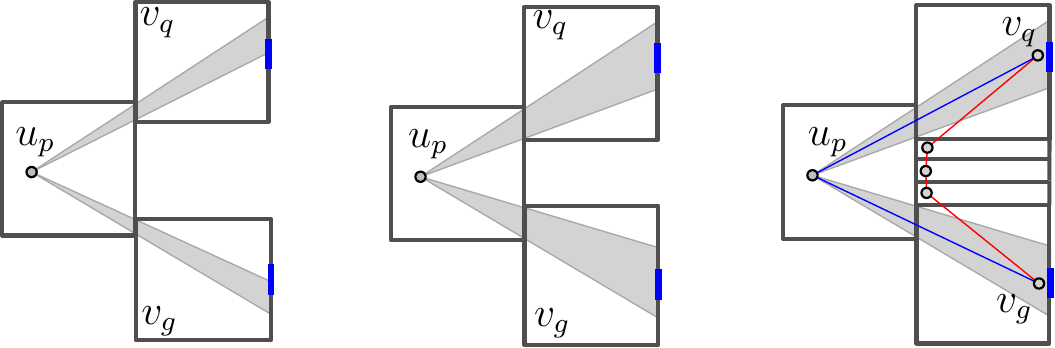}
\caption{\small  Illustration of the second  modification, where the common horizontal boundaries of $v_1,\dots,v_b$ are possibly moved. Left:  Vertex  $u_p$ is a non-diverging neighbour of $v_q$ and $v_g$,  it holds that $\blind(u_p,v_q) \cap \gate(v_q) \neq \emptyset$ and $\blind(u_p,v_g) \cap \gate(v_g) \neq \emptyset$.  Middle: Statement~\ref{stat:non-diverging-above} is applied to $u_p$ and $v_q$. Statement~\ref{stat:non-diverging-below} is applied to $u_p$ and $v_g$. Right: the non-critical neighbors of $u_p$ are placed and the vertices of $G$ are positioned as explained in the proof. }
\label{fig:nondiverg_neigh} 
\end{figure}

Let now $v_q$, $1\leq q \leq b$, be a vertex with non-diverging neighbor $u_p$, $1\leq p \leq a$. The previous modification ensures that $\gate(v_q) \cap \blind(u_p,v_q) \neq \emptyset$.   By Statements~\ref{stat:non-diverging-above}~and~\ref{stat:non-diverging-below},  the exists $Y(u_p,v_q)<y(u_p)$ (case $y(u_p)>y_2[u_p,v_q]$) or $Y(u_p,v_q) > y(u_p)$ (case $y(u_p) < y_1[u_p,v_q]$) such that if we set $y_2(v_q)=Y(u_p,v_q)$ then $\gate(v_q) \subset \vis(u_p,v_q)$ (refer to Figure~\ref{fig:example_indstep}, right and to Figure~\ref{fig:example_indstep2} for the specific example, refer also to Figure~\ref{fig:nondiverg_neigh} for an abstract example). 

In the following we show that the second modification does now destroy the visibilities to the critical vertices which existed after the first modification.  First, observe that after the second modification the  $\gate(v_q)$, $1\leq q \leq b$ of a critical vertex $v_q$  still belongs to $R(v_q)$.  This is ensured by the initial placement of the gates of the critical vertices. Thus, consider Figure~\ref{fig:gate_placement} (right), where $u_p$ is a non-diverging neighbor of $v_{q-1}$. The result of the second modification will be that the common boundary $[v_q,v_{q-1}]$ is moved up to  $y<y(u_p)$. By the placement of the $\gate(v_q)$ we have that $y_1(\gate(v_q)) > y(u_p)$ and therefore $y_1(\gate(v_q))>y$. By a symmetric argument for $v_q$, $v_{q+1}$ and $u_\ell$ we infer that  $\gate(v_q) \subset R(v_q)$. 

Second, assume  that a vertex $u_p$ is a non-diverging neighbor of $v_q$ such that $y(u_p)<y_1[u_p,v_q]$ and such that $\gate(v_q) \cap \blind(u_p,v_q) \neq \emptyset$ (Figure~\ref{fig:nondiverg_neigh}), then the application of Statement~\ref{stat:non-diverging-above} results in moving the segment $[v_q,v_{q-1}]$  down to a $y$-coordinate $Y(u_p,v_q)>y(u_p)$. The last inequality ensures that the visibility of $u_p$ inside $v_{q-1}$ has not changed.

We now explain how to draw the non-critical vertices, consider again Figure~\ref{fig:nondiverg_neigh}. Assume that $u_p$ is a non-diverging neighbour of  $v_q$, such that $y(u_p)<y_1[u_p,v_q]$  and  $v_g$, such that $y(u_p)>y_1[u_p,v_g]$.  As already mentioned,  the application of Statement~\ref{stat:non-diverging-above} results in moving the segment $[v_q,v_{q-1}]$  down to a $y$-coordinate $Y(u_p,v_q)>y(u_p)$. The application of Statement~\ref{stat:non-diverging-below} results in moving $[v_g,v_{g+1}]$ up to $Y(u_p,v_g) < y(u_p)$. Thus, $y[v_g,v_{g+1}] < y(u_p) < y[v_{p-1},v_p]$ and the space between $y[v_g,v_{g+1}]$ and $y[v_{p-1},v_p]$ is used  for the placement of the non-critical neighbors of $u_p$ (see Figure~\ref{fig:nondiverg_neigh}, right). It may happen that $v_{g+1}=v_p$, then we set $y_2(v_g)=y_1(v_q)=y(u_p)$.

Next, we place the actual vertices of the right boundary of $f_{i+1}$, except of $u_1$ and $u_a$, which have already been placed by induction hypothesis. For each critical vertex $v_q$ we place vertex $v_q$ very close to the middle of $\gate(v_q)$. Since for each neighbor $u_p$ of $v_q$  $\gate(v_q) \subset \vis(u_p,v_q)$, the straight-line edge $(u_p,v_q)$ crosses only $[u_p,v_q]$. A non-critical vertex $v_q$, which is adjacent to a single vertex $u_p$, is placed arbitrarily close to the common segment $[u_p,v_q]$.
Thus, the straight-line edge $(u_p,v_q)$ crosses only $[u_p,v_q]$. Edges $(u_1,v_1)$ and $(v_b,u_a)$ cross the segments $[u_1,v_1]$ and $[v_b,u_a]$, respectively, as ensured by the application of Statement~\ref{stat:horizontal}.
 Finally, each edge $(v_j,v_{j+1})$, $1\leq j \leq b-1$, crosses only $[v_j,v_{j+1}]$, since $x_1(v_j) = x_1(v_{j+1})$ and $x_2(v_j) = x_2(v_{j+1})$.   This concludes the proof of the theorem. \qed

\end{description}  
\end{proof}


\section{Conclusion}
In this paper we  considered the problem of drawing simultaneously a planar graph and its rectangular dual. We required that the vertices of the primal are positioned in the corresponding rectangles, the drawing of the primal graph is planar and straight-line, and each edge of the primal crosses only the rectangles where its end-points lie. Our proof in constructive and leads to a linear-time algorithm. However, the vertices  are not placed on the grid and the area requirements of the construction are unclear. It would be interesting to either refine the algorithm to produce a simultaneous drawing with polynomial area, or to construct a counterexample, requiring an exponential area. 

\section*{Acknowledgments} I would like to thank Md. Jawaherul Alam,  Michael Kaufmann, Stephen Kobourov and Roman Prutkin for the useful discussions during the preliminary stage of this work.

\bibliography{bibliography}
\bibliographystyle{plain}

\end{document}